\documentclass[a4paper,fleqn]{cas-sc}

\usepackage[numbers]{natbib}

\usepackage{color}
\usepackage{times}
\usepackage{epsfig}
\usepackage{amsthm}
\usepackage{amsmath}
\usepackage{algorithm}
\usepackage{algorithmic}
\usepackage{mathrsfs}
\usepackage{epstopdf}
\usepackage{subcaption}
\usepackage{float}

\newtheorem{ass}{Assumption}

\newtheorem{definition}{Definition}
\newtheorem{thm}{Theorem}

\newtheorem{rem}{Remark}

\def\tsc#1{\csdef{#1}{\textsc{\lowercase{#1}}\xspace}}
\tsc{WGM}
\tsc{QE}
\tsc{EP}
\tsc{PMS}
\tsc{BEC}
\tsc{DE}

\begin{document}
\let\WriteBookmarks\relax
\def\floatpagepagefraction{1}
\def\textpagefraction{.001}
\shorttitle{Bifurcation Beyond Surface-Tangential Asymptotic Convergence}
\shortauthors{Fan Zhang et~al.}

\title[mode = title]{Bifurcation Beyond Surface-Tangential Asymptotic Convergence in Continuous Sliding Mode Control}               

\author[1,2]{Fan Zhang}[orcid=0000-0003-3664-3553]
\ead{zhangfan_nwpu@nwpu.edu.cn}
\credit{Conceptualization, Funding acquisition, Methodology, Writing - original draft}

\affiliation[1]{organization={School of Astronautics, Northwestern Polytechnical University, }, 
                city={Xi'an},
                postcode={710072}, 
                country={China}}

\author[1]{Jingwen Xu}[orcid = 0009-0008-7174-437X]
\ead{xjw0324@nwpu.edu.cn}
\credit{Writing - Review \& editing}

\author[1]{Peng Li}[orcid=0000-0002-4895-7659]
\ead{lp@nwpu.edu.cn}
\credit{Writing - Review \& editing}

\author[1,2]{Jun Zhou}[orcid=0000-0001-5810-5153]
\ead{zhoujun@nwpu.edu.cn}
\credit{Validation}

\author[1]{Yaohua Guo}[orcid=0000-0002-1489-969X]
\cormark[1]
\ead{y.guo2@nwpu.edu.cn}
\credit{Funding acquisition, Writing - review \& editing}

\affiliation[2]{organization={Ningbo Institute of Northwestern Polytechnical University},
                postcode={315103}, 
                city={Ningbo},
                country={China}}

\cortext[cor1]{Corresponding author}

\begin{abstract}
For second-order systems under continuous sliding mode control (SMC), the literature has long relied, largely through phase-portrait illustrations, on the implicit convention that the phase-plane trajectory approaches the equilibrium along a direction tangential to the designed sliding surface. This paper investigates this tangential convergence assumption through a rigorous phase-plane analysis of the double-integrator system subject to standard linear SMC. We reveal that the asymptotic state ratio is not unique but instead exhibits a bifurcation that depends on the control gains, the sliding surface parameter, and the initial conditions. The key finding is that the system may converge along an implicit secondary manifold rather than aligning with the designed sliding surface, implying that smooth entry is not globally guaranteed. We derive closed-form expressions for the convergence ratios and establish a classification framework that precisely characterizes when the smooth-entry assumption holds and when it fails. The analysis is further extended to classical PD control, and we show that the bifurcation threshold coincides with the critical damping boundary that separates the two convergence regimes. These findings bridge terminal geometry and convergence smoothness, providing a predictive framework for high-performance motion control design. Simulation results validate the proposed classification of convergence regimes.
\end{abstract}



\begin{keywords}
Bifurcation\sep phase portrait\sep sliding mode control\sep smooth entry\sep state ratio
\end{keywords}

\maketitle

\section{Introduction}

Sliding mode control (SMC) has long been applied to the operation of classical second-order systems, owing to its conceptual simplicity and inherent robustness against uncertainties \citep{utkin2013, JAVADI2026108616}. The core idea is to enforce the system states onto a prescribed sliding manifold and maintain it there thereafter \citep{young1999, OVALLE2026112797, KRAVARIS20121583}. Once the sliding motion is established, the system behavior is entirely governed by the reduced-order dynamics on this manifold, which are independent of plant perturbations. Over the past decades, research in this field has reached a high level of maturity, with major thrusts concentrated on three pillars: achieving finite/fixed-time convergence \citep{perruquetti2002, Chang2024}, enhancing robustness against matched/unmatched uncertainties \citep{rm24, bm2025}, and alleviating the chattering issue \citep{levant1993, deng2024, JIANG2026106439}.

Despite these extensive developments, a surprisingly fundamental geometric aspect concerning \textit{the terminal convergence behavior of continuous SMC systems} has remained largely overlooked. For classical second-order systems, the literature has long adhered, largely through the phase-portrait illustrations \citep{lfs2018, mfbc2018}, to an implicit convention that the phase-plane trajectory enters the equilibrium along the sliding surface \( s=0 \), namely \textit{smoothly}. Specifically, it is tacitly presumed that the trajectory approaches and crosses the manifold with its phase-plane tangent aligned with the sliding surface, so that the ratio of the two state variables at the instant of entry coincides with the slope determined by the manifold. To the best of our knowledge, this smooth-entry hypothesis has never been formally investigated in prior works. One likely reason is that the entry detail is deemed immaterial to the standard reachability and stability analyses: for the purpose of proving convergence, whether this geometric alignment holds globally does not affect the validity of the conventional arguments. However, the lack of scrutiny becomes consequential when one moves beyond qualitative stability to finite/fixed-time convergence behavior. As we shall demonstrate in the subsequent sections, the validity of the smooth-entry assumption has a direct and nontrivial bearing on whether the convergence time is finite or infinite, thereby challenging a foundational premise that underlies the standard transient analysis and convergence rate estimation \citep{jlly2024}.

Our analysis reveals that this presumed geometric alignment is not always globally valid for the continuous SMC systems. The entry ratio may, in fact, bifurcate and deviate markedly from the slope of the sliding surface, depending critically on the initial conditions and controller parameters. It is important to clarify at the outset that the present paper does not challenge the invariance property of the sliding motion after the state has reached the sliding surface. It is well known that, once on the manifold, the phase-plane trajectory indeed obeys the structure of the designed sliding surface and the state ratio is uniquely determined. Rather, our focus is on the geometric manner in which the trajectory enters the sliding surface when continuous SMC is applied.

The revelation that this entry is not universally smooth has direct practical consequences: in hardware-in-the-loop emulations and digital implementations of SMC, the entry geometry affects the dynamic range and quantization accuracy of the controller near the switching surface, and may explain unexpectedly suboptimal convergence speeds observed in practice \citep{ZHANG2025112125}. From a theoretical standpoint, this overlooked geometric nuance also carries profound implications for the broader family of terminal SMC. It has been a long-standing open problem to furnish a fully rigorous proof of finite-time stability for second-order Terminal SMC systems \citep{man1994}. A central difficulty lies in the fact that finite-time convergence proofs rely on precise bounds of the phase-plane trajectory during the terminal phase. Most critically, on the behavior of the trajectory in the vicinity of the sliding manifold. While existing Lyapunov-based arguments typically circumvent this issue by employing homogeneous inequalities, they often implicitly rely on a favorable (i.e., smooth) entry geometry. Our findings on the linear SMC prototype demonstrate that such smooth entry is, in general, non-global and parameter-dependent. Thus, the bifurcation phenomenon we reveal for the linear case serves as a foundational cautionary example, highlighting that any rigorous finite-time proof for Terminal SMC must explicitly account for the entry geometry rather than take smooth entry for granted.

In this paper, we challenge the entrenched smooth-entry assumption by conducting a thorough phase-plane analysis of the classical double-integrator system under linear SMC, as depicted in Fig.~\ref{fig:presentation}. We show that, for the classical linear SMC law and the specific sliding mode, the entry ratio bifurcates according to the controller gain, the sliding surface parameter, and the initial state. Consequently, the revealed geometric bifurcation is an intrinsic property of the closed-loop dynamics, not an artifact of a particular control implementation. We further extend the analysis to the broader class of PD controllers and show that the sliding-mode result emerges as a special case of a more general critical parameter boundary, thereby unifying the understanding of second-order linear feedback systems. The main contributions of this work are summarized as follows:

\begin{figure}
\centering
\includegraphics[width=0.67\textwidth, trim=0pt 0pt 0pt 0pt, clip]{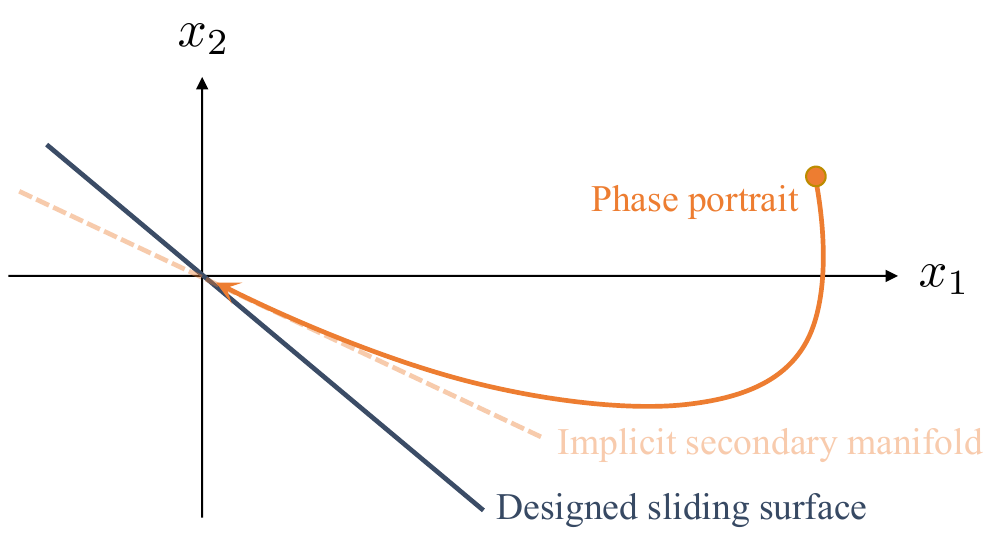}
\caption{Trajectory does not align with the designed sliding surface.}
\label{fig:presentation}
\end{figure}

 \begin{enumerate}
 	\item A novel bifurcation phenomenon in the asymptotic convergence of continuous SMC is uncovered, breaking the conventional tangential surface convergence consensus.
	\item Closed-form convergence ratio expressions and a unified classification framework are derived to judge the validity and failure of the classic smooth-entry assumption.
	\item The inherent correlation between the proposed bifurcation threshold and the critical damping boundary of PD control is revealed, guiding high-performance motion control design.
\end{enumerate}

The remainder of this paper is organized as follows. Section~\ref{sec:problem} formulates the problem about the smooth sliding entry. Section~\ref{sec:results} presents the bifurcation phenomenon of continuous SMC systems and the analytical characterization of the entry ratio, and extends the results to the general PD control framework. Section~\ref{sec:simu} provides numerical validations and phase-plane illustrations, and Section~\ref{sec:conclusion} concludes the paper with a discussion on future directions.

\section{Problem Formulation}
\label{sec:problem}
Consider a classical double-integrator linear system as
\begin{align}
\label{eq_ddcls_1}
\dot{x}_1(t) &= x_2(t) \nonumber \\
\dot{x}_2(t) &= u(t),
\end{align}
where $x_1, x_2 \in \mathbb{R}$ are the system states and $u \in \mathbb{R}$ is the control input. \footnote{For clarity, $x\in\mathbb{R}$ denotes a point in the state space, whereas $x(t)\in\mathbb{R}$ denotes the solution trajectory of the system.} This model describes a broad class of second-order physical systems, such as simplified mechanical positioning stages and vehicle dynamics. A linear sliding surface is defined as
\begin{align}
\label{eq_ddcls_2}
s(t) = x_1(t) + kx_2(t),
\end{align}
where $k > 0$ is a design parameter that determines the slope of the sliding surface in the phase plane. 

To rigorously characterize the terminal convergence geometry, we introduce the definition of smooth sliding entry.

\begin{definition}[Smooth sliding entry]
\label{def1}
The terminal convergence of the closed-loop system \eqref{eq_ddcls_1} is said to exhibit \emph{smooth sliding entry} if
\begin{align}
\lim_{t \to \infty} \frac{x_1(t)}{x_2(t)} = -k,
\end{align}
where $k$ is the slope of the prescribed sliding surface \eqref{eq_ddcls_2}. Otherwise, the convergence is said to exhibit \emph{non-smooth entry}, and the degree of misalignment is quantified by the indicator
\begin{align}
\Psi \triangleq \left| \arctan\left( \lim_{t \rightarrow \infty} \frac{x_1(t)}{x_2(t)} \right) + \arctan k \right|.
\end{align}
\end{definition}

\begin{rem}
According to Definition \ref{def1}, tangential smooth entry ($\Psi = 0$) indicates that the phase plane trajectory asymptotically aligns with the sliding surface, ensuring a seamless transition to the equilibrium. However, $\Psi \neq 0$ indicates that the trajectory converges to the origin along a path distinct from the sliding surface, revealing the emergence of bifurcation. This deviation quantifies the ``lack of smoothness'' during the terminal convergence phase.
\end{rem}

With Definition~\ref{def1}, we characterize the convergence behavior. Specifically, the following issues are investigated:
\begin{enumerate}
    \item To derive the exact analytical expression for the asymptotic state ratio $$\lim_{t \to \infty} \frac{x_1(t)}{x_2(t)}$$ and rigorously prove its bifurcating dependence on the controller parameter $k$ and the initial conditions.
    
    \item To demonstrate that the classical linear SMC system inherently exhibits two convergence modes governed by an implicit secondary manifold, and to provide its precise mathematical description.
    
    \item To extend the analysis to general PD controllers and identify the critical threshold that governs both the bifurcation behavior and the equivalence between PD and sliding mode control.
\end{enumerate}

To focus the analysis on convergence, we impose the following non-triviality assumption on initial conditions.

\begin{ass}
\label{ass1}
$|x_1(0)|+|x_2(0)|\neq 0$ holds.
\end{ass}

\begin{rem}
If the initial condition satisfies $x_1(0)=0$ and $x_2(0)=0$, then the system would perpetually satisfy $x_1(t) \equiv 0$ and $x_2(t) \equiv 0$ for all $t \geqslant 0$ under the given control, which constitutes a degenerate case for asymptotic analysis.
\end{rem}

\section{Main Results of SMC Bifurcation}
\label{sec:results}

This section establishes the analytical framework for the asymptotic state ratio under a general class of linear sliding mode control laws. We reveal that the terminal convergence behavior is not unique but exhibits a bifurcation phenomenon determined by the interplay between control parameters and initial conditions.

We introduce a general linear control input as
\begin{align}
	\label{eq_control_law_3}
	u(t) = -\frac{1}{k} \left(x_2(t) + hs(t)\right),
\end{align}
where $h > 0$ is a reaching rate parameter that generalizes the reaching behavior. This control law encompasses two classical cases: $h = 1$ corresponds to the standard linear reaching law, and $h = k$ yields a parameter-dependent reaching rate. Based on the control law \eqref{eq_control_law_3}, we have the following theorem.

\begin{thm}
	\label{thm_3}
	Consider the closed-loop system \eqref{eq_ddcls_1} the control law \eqref{eq_control_law_3} and suppose that Assumption \ref{ass1} holds. Then, the equilibrium at the origin is globally asymptotically stable. Moreover, the system exhibits a bifurcation behavior characterized by the asymptotic ratio
	\begin{equation}
		\label{eq_limit_ratio_3}
		\Psi 
		= \begin{cases} 
			0, & \mathrm{if}~ \mathcal{C}_1 \\
			\psi_{\mathrm{nz}}, & \mathrm{if}~ \mathcal{C}_2
		\end{cases}
	\end{equation}
	where $\psi_{\mathrm{nz}}$ is some nonzero scalar, and the convergence regimes partition the parameter and initial state space as
	\begin{itemize}
		\item $\mathcal{C}_1: 
			\begin{aligned}[t]
				&(0 < kh < 1 \text{ and } s(0) = 0) \\
				& \text{or } (kh > 1 \text{ and } hx_1(0)+x_2(0) \neq 0) \\
				&\text{or } (kh=1);
			\end{aligned}$
		\item $\mathcal{C}_2: 
			\begin{aligned}[t]
				&(0 < kh < 1 \text{ and } s(0) \neq 0) \\
				&\text{or } (kh > 1 \text{ and } hx_1(0)+x_2(0) = 0).
			\end{aligned}$
	\end{itemize}
\end{thm}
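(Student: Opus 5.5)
The plan is to compute the closed-loop dynamics explicitly. Substituting \eqref{eq_control_law_3} and $s=x_1+kx_2$ into \eqref{eq_ddcls_1} gives the linear system $\dot x_1=x_2$, $\dot x_2=-\frac{h}{k}x_1-\frac{1+hk}{k}x_2$. Its characteristic polynomial, multiplied by $k$, factors as
\begin{equation*}
k\lambda^2+(1+hk)\lambda+h=(k\lambda+1)(\lambda+h).
\end{equation*}
The eigenvalues are therefore $\lambda_1=-1/k$ and $\lambda_2=-h$, both negative. Global (exponential) asymptotic stability of the origin follows at once. For an eigenvalue $\lambda$, an eigenvector satisfies $x_2=\lambda x_1$, so along it $x_1/x_2=1/\lambda$. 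For $\lambda_1$ this is the line $s=0$, with ratio $-k$; for $\lambda_2$ it is the line $hx_1+x_2=0$, with ratio $-1/h$. The second line is the ``implicit secondary manifold'', and both lines are invariant.

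\textbf{Case $kh\neq 1$.} The eigenvalues are distinct, and I would write
\begin{equation*}
x(t)=c_1e^{-t/k}v_1+c_2e^{-ht}v_2, \qquad v_1=(k,-1)^\top,\quad v_2=(1,-h)^\top.
\end{equation*}
Solving for the coefficients at $t=0$ shows that $c_2$ is proportional to $s(0)=x_1(0)+kx_2(0)$ and $c_1$ is proportional to $hx_1(0)+x_2(0)$, each with nonzero factor $1/(1-kh)$ up to sign. Assumption~\ref{ass1} guarantees that the two coefficients do not both vanish.
\begin{itemize}
\item If $kh<1$, then $h<1/k$, so $e^{-ht}$ is the slow mode. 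Generically ($s(0)\neq0$) the trajectory aligns with $v_2$, and $x_1/x_2\to-1/h$. If $s(0)=0$, the solution stays on $s=0$ and the ratio is identically $-k$.
\item If $kh>1$, the roles swap. The slow mode is $e^{-t/k}$, giving ratio $-k$ whenever $hx_1(0)+x_2(0)\neq0$. Otherwise the solution stays on the secondary line, with ratio $-1/h$.
\end{itemize}
In every subcase $x_2(t)\neq0$ for large $t$, since $x_2$ is a sum of exponentials that is not identically zero, so the ratio is eventually well defined. The limits follow by dividing numerator and denominator by the slow exponential.

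\textbf{Case $kh=1$.} There is a double root $-1/k$. Since the matrix is not scalar, the general solution has the form $x(t)=e^{-t/k}\big((a+bt)v_1+a'w\big)$, with $w$ a generalized eigenvector. I would instead argue directly. Because $s$ is a scalar and $s=0$ is invariant, we have $\dot s=x_2+k\dot x_2=-hs$, hence $s(t)=s(0)e^{-ht}$. Then $\dot x_1=(s-x_1)/k$ gives $x_1(t)=e^{-t/k}\big(x_1(0)+\tfrac{s(0)}{k}t\big)$. Consequently $x_2=(s-x_1)/k$ behaves like $-\tfrac{s(0)}{k^2}te^{-t/k}$ when $s(0)\neq0$, so the ratio tends to $-k$. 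When $s(0)=0$ the ratio is exactly $-k$. This is the $\mathcal{C}_1$ branch.

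Finally, I would compute the misalignment indicator of Definition~\ref{def1}. In the $-k$ cases, $\Psi=0$. In the $-1/h$ cases,
\begin{equation*}
\Psi=\psi_{\mathrm{nz}}=\left|\arctan k-\arctan(1/h)\right|,
\end{equation*}
which is nonzero because $\arctan$ is injective and $k\neq1/h$. This reproduces exactly the partition $\mathcal{C}_1$/$\mathcal{C}_2$.

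The main obstacle is bookkeeping rather than depth. It consists of correctly identifying which initial-condition line kills which modal coefficient (the slow mode vanishes exactly when the initial state lies on the fast eigenline), and handling the repeated-root case, including the well-definedness of $x_1/x_2$.
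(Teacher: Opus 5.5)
Your proposal is correct and follows essentially the paper's route. Your modal coefficients $c_1\propto hx_1(0)+x_2(0)$ and $c_2\propto s(0)$ are exactly the paper's decoupling coordinates $\sigma_1,\sigma_2$, and your $kh=1$ computation matches Part 3; the one real difference is that you get global asymptotic stability from the eigenvalues $-1/k,-h$ rather than from LaSalle with $V=\tfrac12 s^2$, which is the cleaner argument because that $V$ is only positive semidefinite and so does not by itself guarantee bounded trajectories.
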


\begin{proof}
	The proof proceeds in three parts: stability analysis, Asymptotic State Ratio Analysis for $kh\neq 1$, and Asymptotic State Ratio Analysis for $kh=1$.
	
	\textbf{Part 1: Stability Analysis.}
	Consider the Lyapunov function candidate 
		\begin{align*}
			V(t) = \frac{1}{2} s^2(t).
		\end{align*}
		Its derivative along the closed-loop trajectories is
	\begin{align}
	\label{eq6_add}
		\dot{V}(t) = s\left(x_2(t) + k u(t)\right),
	\end{align}
	where \eqref{eq_ddcls_1} and \eqref{eq_ddcls_2} are utilized. Substituting the control law \eqref{eq_control_law_3} into \eqref{eq6_add} yields
	\begin{align}
		\label{eq_8}
		\dot{V}(t) = -h s^2(t).
	\end{align}
	By LaSalle's invariance principle, all trajectories converge to the largest invariant set contained in $\{s=0\}$. On $s=0$, the control reduces to $$u(t) = -\frac{x_2(t)}{k},$$ giving $$\dot{x}_2 = -\frac{x_2(t)}{k} \text{and} x_1(t) = -k x_2(t)$$, which asymptotically approaches the origin. Hence the origin is globally asymptotically stable.

	\textbf{Part 2: Asymptotic State Ratio Analysis for $kh\neq 1$.}
	To analyze the terminal convergence behavior, we introduce the new coordinates
	\begin{align}
	\label{eq8_add}
		\sigma_1(t) &= h x_1(t) + x_2(t)
		\nonumber\\
		\sigma_2(t) &= s(t).
	\end{align}
	Based on this coordinate transformation, we decouple the system \eqref{eq_ddcls_1} into two independent first-order linear differential equations as
	\begin{align}
		\dot{\sigma}_1(t) 
		&= -\frac{1}{k}\sigma_1(t)
		\nonumber
		\\
		\dot{\sigma}_2(t) &=  -h \sigma_2(t).
	\end{align}
	Their solutions are
	\begin{align}
	\label{eq10_add}
		\sigma_1(t) &= \sigma_1(0) e^{-\frac{t}{k}}
		\nonumber
		\\ 
		\sigma_2(t) &= \sigma_2(0) e^{-h t}.
	\end{align}
	Since $kh\neq 1$, the inverse transformation is well defined and yields
	\begin{align}
		x_1(t) &= \frac{k \sigma_1(t) - \sigma_2(t)}{kh - 1}
		\nonumber
		\\
		x_2(t) &= \frac{h \sigma_2(t) -  \sigma_1(t)}{kh - 1}.
	\end{align}
	Consequently,
	\begin{align}
	\label{eq12_add}
		\frac{x_1(t)}{x_2(t)} = \frac{k \sigma_1(t) - \sigma_2(t)}{h \sigma_2(t) - \sigma_1(t)}.
	\end{align}
	Substituting \eqref{eq10_add} into \eqref{eq12_add}, we have
	\begin{align}
		\frac{x_1(t)}{x_2(t)} = \frac{k \sigma_1(0) e^{-\frac{t}{k}} - \sigma_2(0) e^{-h t}}{h \sigma_2(0) e^{-h t} - \sigma_1(0) e^{-\frac{t}{k}}}.
	\end{align}
	Dividing both numerator and denominator by $e^{-h t}$ yields
	\begin{align}
	\label{eq15_add}
		\frac{x_1(t)}{x_2(t)} = -  \frac{\sigma_2(0) - k \sigma_1(0) \rho(t)}{h \sigma_2(0) - \sigma_1(0) \rho(t)},
	\end{align}
	where
	\begin{align*}
	\rho(t) \triangleq e^{\frac{kh-1}{k}t}.
	\end{align*}
	The asymptotic behavior is determined by the sign of $kh-1$. Next, the following cases are considered.
	
	\begin{itemize}	
		\item {Case of $0 < kh < 1$ and $\sigma_2(0) = 0$:} If $\sigma_2(0) = 0$ holds, then $\sigma_1(0) \neq 0$ according to \eqref{eq8_add} and Assumption \ref{ass1}. Hence,
		\begin{align}
			\lim_{t \rightarrow \infty} \frac{x_1(t)}{x_2(t)} = -k,
		\end{align}
		which gives $\Psi= 0$. 
		
		\item {Case of $0 < kh < 1$ and $\sigma_2(0) \neq 0$:} If $\sigma_2(0) \neq 0$ and $kh - 1 < 0$, then $\sigma_1(0) \neq 0$ and $\lim_{t\to \infty}\rho(t) = 0$. As a result,
		\begin{align}
			\lim_{t \rightarrow \infty} \frac{x_1(t)}{x_2(t)} = -\frac{1}{h},
		\end{align}
		which gives $\Psi\neq 0$.
		
		\item {Case of $kh > 1$ and $\sigma_1(0) \neq 0$:} It is known that $kh > 1$ implies that $\rho(t)$ tends to positive infinity. This fact, together with $\sigma_1(0) \neq 0$, leads to
		\begin{align}
			\lim_{t \rightarrow \infty} \frac{x_1(t)}{x_2(t)} = -k,
		\end{align}
		which gives $\Psi= 0$.

		\item {Case of $kh > 1$ and $\sigma_1(0) = 0$:} If $\sigma_1(0) = 0$ holds, then $\sigma_2(0) \neq 0$ according to \eqref{eq8_add} and Assumption \ref{ass1}. This fact, together with \eqref{eq15_add}, implies
		\begin{align}
			\lim_{t \rightarrow \infty} \frac{x_1(t)}{x_2(t)} = -\frac{1}{h}, 
		\end{align}
		which gives $\Psi\neq 0$.
		\end{itemize}
		
		\textbf{Part 3: Asymptotic State Ratio Analysis for $kh = 1$.} When $kh = 1$, we have $\sigma_1(t) = h \sigma_2(t)$, and the decoupling coordinate transformation used above becomes singular. Solving the system \eqref{eq_ddcls_1} with the aid of \eqref{eq10_add} gives 
		\begin{align*}
		\dot{x}_1(t) &= x_2(t)
		\\
		\dot{x}_2(t) &= -\frac{1}{k} x_2(t) - \frac{1}{k^2} \sigma_2(0) e^{-\frac{t}{k}},
		\end{align*}
		and consequently,
		\begin{align}
		x_1(t) &= \left({x_1(0) + \frac{1}{k} \sigma_2(0)t}\right)e^{-\frac{t}{k}} \nonumber \\
		x_2(t) &= \left({x_2(0) - \frac{1}{k^2} \sigma_2(0)t}\right){e^{-\frac{t}{k}}}.
		\end{align}		
		If $\sigma_2(0) = 0$, then $x_1(0) = -k x_2(0)$ by \eqref{eq8_add}, which implies
		\begin{align}
			\lim_{t \rightarrow \infty} \frac{x_1(t)}{x_2(t)} &=
			\frac{x_1(0)}{x_2(0)}
			\nonumber
			\\&= -k. 
		\end{align}
		\noindent If $\sigma_2(0) \neq 0$, then
		\begin{align}
			\lim_{t \rightarrow \infty} \frac{x_1(t)}{x_2(t)} &=
			\lim_{t \rightarrow \infty} \frac{\frac{1}{k} \sigma_2(0)t}{-\frac{1}{k^2} \sigma_2(0)t}
			\nonumber
						\\&
						=-k. 
		\end{align}
		Hence $\Psi=0$ holds for all initial conditions when $kh = 1$.

The above analysis establishes \eqref{eq_limit_ratio_3}, demonstrating that the system undergoes a bifurcation in its asymptotic behavior. This completes the proof of Theorem~\ref{thm_3}.
\end{proof}
\begin{rem}
\label{rem_completeness}
Since Assumption \ref{ass1} excludes the equilibrium as an initial state, the conditions $s(0)=0$ and $h x_1(0)+x_2(0)=0$ are mutually exclusive for any $kh \in (0,1) \cup (1,\infty)$. Consequently, the regimes $\mathcal{C}_1$ and $\mathcal{C}_2$ cover all possible convergence scenarios for $k > 0$ and $h > 0$.
\end{rem}

\begin{rem}
The bifurcation of the state ratio is an intrinsic property of the control structure. The parameters $k$ and $h$ serve as the ``modal weights.'' This shift is what forces the system trajectory to switch its terminal alignment, leading to the two convergence behavior observed in Theorem \ref{thm_3}.
\end{rem}

\begin{rem}
The bifurcation directly affects two key performance metrics: convergence speed and transient smoothness. When $kh\neq 1$, the effective decay rate of the states is either $\frac{1}{k}$ or $h$. Moreover, $\psi_{\mathrm{nz}}\neq 0$ shows that the trajectory bypasses the designed sliding surface. Thus, the bifurcation provides a theoretical basis for parameter selection and initial condition planning in high-precision motion control.
\end{rem}

It is worth noting that the sliding mode control law \eqref{eq_control_law_3} can be viewed as a special case of a PD controller of the form \eqref{eq_control_law_1}, where the proportional and derivative gains are not independent but instead satisfy a particular algebraic relation imposed by the sliding surface design. A natural extension is to consider the general PD control law
\begin{align}
\label{eq_control_law_1}
u(t) = -\alpha x_1(t) - \beta x_2(t),
\end{align}
where $\alpha>0$ and $\beta>0$.

\begin{thm}
\label{cor_1}
Consider the closed-loop system \eqref{eq_ddcls_1} under the control law \eqref{eq_control_law_1} and suppose that Assumption \ref{ass1} holds. Then the equilibrium at the origin is globally asymptotically stable. Moreover, the asymptotic behavior of the system depends on the controller gains as follows:
\begin{itemize}
    \item If $\beta > 2\sqrt{\alpha}$, the system exhibits a bifurcation behavior, i.e., $\Psi = 0$ for some initial conditions and $\Psi = \psi_{\mathrm{nz}}$ for others, where $\psi_{\mathrm{nz}}$ is a nonzero scalar.
    \item If $\beta = 2\sqrt{\alpha}$, the system exhibits no bifurcation, and $\Psi = 0$ holds for all initial conditions.
\end{itemize}
\end{thm}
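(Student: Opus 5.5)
Our plan is to reduce Theorem~\ref{cor_1} to the structure already exploited in the proof of Theorem~\ref{thm_3}. The closed loop is $\dot x_1 = x_2$, $\dot x_2 = -\alpha x_1 - \beta x_2$, with characteristic polynomial $\lambda^2+\beta\lambda+\alpha$. Its roots $\lambda_{1,2} = \tfrac{1}{2}(-\beta \pm \sqrt{\beta^2-4\alpha})$ satisfy $\lambda_1\lambda_2=\alpha>0$ and $\lambda_1+\lambda_2=-\beta<0$. Hence both roots have negative real part, which gives global asymptotic stability immediately. Alternatively, one can use the Lyapunov function $V=\tfrac12(\alpha x_1^2+x_2^2)$ with $\dot V=-\beta x_2^2$ together with LaSalle, mirroring Part~1 of Theorem~\ref{thm_3}. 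The ratio $x_1/x_2$ is measured against a sliding surface, so we must say what $k$ is. We will identify the PD law with \eqref{eq_control_law_3} by matching $\alpha = h/k$ and $\beta = (1+kh)/k$. Then $1/k$ and $h$ are exactly $-\lambda_{1,2}$, and $k$ is the reciprocal of one of the real roots.

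\textbf{Case $\beta>2\sqrt{\alpha}$.} The roots are real and distinct, $\lambda_2<\lambda_1<0$, with eigenvectors $(1,\lambda_i)$. Consequently each invariant line is $x_1 = x_2/\lambda_i$.

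We introduce the decoupling coordinates $\sigma_i = x_2 - \lambda_j x_1$ for $i\neq j$, which satisfy $\dot\sigma_i=\lambda_i\sigma_i$. This is the exact analogue of \eqref{eq8_add}. Inverting gives $x_1,x_2$ as combinations of $\sigma_1(0)e^{\lambda_1 t}$ and $\sigma_2(0)e^{\lambda_2 t}$. Dividing by the slow mode yields the following limits:
\begin{itemize}
\item if $\sigma_1(0)\neq0$, then $x_1/x_2\to 1/\lambda_1$;
\item if $\sigma_1(0)=0$, then by Assumption~\ref{ass1} $\sigma_2(0)\neq0$, and $x_1/x_2 \equiv 1/\lambda_2$.
\end{itemize}
Since $\lambda_1\neq\lambda_2$, exactly one of these limits equals $-k$ under the sliding-surface identification, so $\Psi=0$ on one set of initial conditions. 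For the other set, $\Psi=|\arctan(1/\lambda_{\text{other}})+\arctan k|$, which is nonzero because $\arctan$ is injective and the two slopes differ. That set is nonempty: it is a whole open half-plane complement, or at least the invariant line. This establishes the bifurcation claim.

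\textbf{Case $\beta=2\sqrt{\alpha}$.} There is a double root $\lambda=-\sqrt{\alpha}$, which corresponds to $kh=1$ with $k=1/\sqrt\alpha$. The solution is $x(t)=(c_0+c_1 t)e^{\lambda t}$, and we follow Part~3 of Theorem~\ref{thm_3}. If the initial state lies on the eigenline $x_2=\lambda x_1$, the ratio is constant and equals $1/\lambda=-k$. Otherwise the $t e^{\lambda t}$ term dominates, its coefficient vector is proportional to the eigenvector $(1,\lambda)$, and again $x_1/x_2\to 1/\lambda=-k$. Thus $\Psi=0$ for all nontrivial initial conditions.

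\textbf{Main obstacle.} The hard step is conceptual rather than computational: a general PD law does not come with a designated sliding slope $k$. Definition~\ref{def1} therefore needs the identification $\alpha=h/k$, $\beta=(1+kh)/k$ (equivalently $s=x_1+kx_2$ with $-1/k$ a closed-loop eigenvalue) in order to be meaningful. We will state this matching explicitly. We will also note that for $\beta>2\sqrt\alpha$ either root choice produces a bifurcation, so the conclusion does not depend on which root is taken as $-1/k$. The remaining technical care is in the repeated-root case, where we must check that the secular term's coefficient vector lies along the eigenvector, i.e.\ $(A-\lambda I)c_0$ is an eigenvector. This follows because $(A-\lambda I)^2=0$.
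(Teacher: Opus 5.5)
Your proposal is correct and follows the paper's argument: the paper's omitted proof is exactly the reduction to Theorem~\ref{thm_3} via $\alpha=h/k$, $\beta=(1+kh)/k$, under which $-1/k$ and $-h$ are the closed-loop roots and $\beta^2-4\alpha=(1-kh)^2/k^2$ makes $\beta=2\sqrt{\alpha}$ correspond to $kh=1$; your eigen-coordinate computation re-derives Parts 2 and 3 of that proof instead of citing them. Your remark that for $\beta>2\sqrt{\alpha}$ either real root may serve as $-1/k$ (giving $kh>1$ or $kh<1$, and a bifurcation in both cases) makes explicit a point the paper leaves implicit; note only that $k=-1/\lambda_i$ is the negative reciprocal of a root, not its reciprocal.
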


\begin{proof}
The proof follows directly from Theorem~\ref{thm_3} via a straightforward controller parameter transformation and is therefore omitted for brevity.
\end{proof}


\begin{rem}
Extending the linear SMC \eqref{eq_control_law_3} to the general PD structure \eqref{eq_control_law_1} not only preserves the global asymptotic stability of the origin but also reveals a richer bifurcation structure parameterized by the controller gains $\alpha$ and $\beta$. 
This extension uncovers the threshold $\beta = 2\sqrt{\alpha}$ as the critical sliding surface boundary: when $\beta > 2\sqrt{\alpha}$, the system exhibits a bifurcated sliding mode with the asymptotic regime depending on the initial conditions; when $\beta = 2\sqrt{\alpha}$, the sliding mode is non-bifurcated and $\Psi = 0$ holds for all initial conditions; when $\beta < 2\sqrt{\alpha}$, no classical linear sliding mode corresponding to \eqref{eq_control_law_3} exists.
\end{rem}



\section{Simulation of Four Asymptotic Cases}
\label{sec:simu}

In this section, numerical simulations are conducted to investigate the dynamic evolution characteristics of the closed-loop system formulated in \eqref{eq_ddcls_1}, where the control law designed in \eqref{eq_control_law_3} is adopted as the core control strategy for all simulation scenarios. The primary objective of the numerical study is to quantitatively and qualitatively verify the theoretical asymptotic properties of the system state ratio derived in \eqref{eq_limit_ratio_3}, with a particular focus on validating the distinct dynamic behaviors corresponding to the four predefined asymptotic cases.

\begin{table}[!hb]
	\centering
	\caption{Simulation Parameters for Four Cases}
	\label{tab:four_cases_params}
	\begin{tabular}{@{}lccccc@{}}
		\toprule
		\textbf{Case} & {Description} & $k$ & $h$ & $x_1(0)$ & $x_2(0)$ \\
		\midrule
		$1$ & $0<kh<1$, $x_1(0)+kx_2(0) = 0$ & $2.0$ & $0.4$  & $2.0$ & $-1.0$   \\
		$2$ & $kh>1$, $hx_1(0)+x_2(0) \neq 0$ & $2.0$ & $0.6$  & $2.0$ & $1.0$  \\
		$3$ & $0<kh<1$, $x_1(0)+kx_2(0) \neq 0$ & $2.0$ & $0.4$  & $2.0$ & $1.0$ \\
		$4$ & $kh>1$, $hx_1(0)+x_2(0) = 0$ & $2.0$ & $0.6$ & $2.0$ & $-1.2$ \\
		\bottomrule
	\end{tabular}
\end{table}

All simulation parameters and initial state conditions utilized in the numerical experiments are summarized in Tab.~\ref{tab:four_cases_params}. These configurations are elaborately selected to strictly satisfy the prerequisite theoretical constraints for each of the four asymptotic cases, ensuring that the simulation scenarios can fully reproduce the dynamic regimes analyzed in the theoretical derivation. Such a parameter setting scheme guarantees the consistency between numerical verification and theoretical analysis, and enables a reliable validation of the proposed theoretical results.

The theoretical conclusions presented in the previous sections are comprehensively validated via four cases of comparative numerical simulations, whose dynamic responses are visualized in Figs.~\ref{fig:three_cases_phase1}, \ref{fig:three_cases_phase2}, \ref{fig:three_cases_phase3}, and \ref{fig:three_cases_phase4}. Specifically, Figs.~\ref{fig:sub1}, \ref{fig:sub4}, \ref{fig:sub7}, and \ref{fig:sub10} depict the phase plane trajectories of the system \eqref{eq_ddcls_1} in the state space under the parameter and initial condition configurations corresponding to the four cases. For the first two asymptotic cases, the system phase trajectories gradually approach the system equilibrium point with a tangential trend toward the predefined sliding surface \(s=0\). This typical dynamic characteristic corresponds to the conventional smooth sliding mode entry behavior, which is consistent with the classic sliding mode dynamic mechanism. In sharp contrast, the last two cases exhibit entirely different convergence characteristics, where the system phase trajectories converge along an implicit secondary manifold rather than the predefined sliding surface. Notably, the fourth case presents the most distinct geometric dynamic feature: the steady-state trajectory fails to align with the predefined sliding manifold \(s=0\), but maintains a constant and noticeable deflection angle relative to the manifold, and finally converges asymptotically along the inherent secondary slope. The above intuitive simulation observations solidly demonstrate that the terminal dynamic behavior of the system state trajectory is dominated by the dominant exponential convergence mode of the system, rather than constrained by the artificially predefined sliding surface \(s=0\).

\begin{figure}
	\centering
	\begin{subfigure}{1\textwidth}
		\centering
		\includegraphics[width= 0.67\textwidth, trim=0 0 0 0, clip]{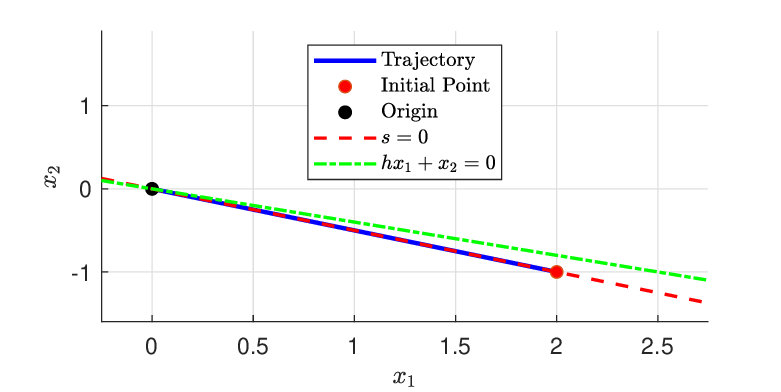}
		\caption{Phase portrait and state trajectories.}
		\label{fig:sub1}
	\end{subfigure}
	
	
	\begin{subfigure}{1\textwidth}
		\centering
		\includegraphics[width= 0.67\textwidth, trim=0 0 0 0, clip]{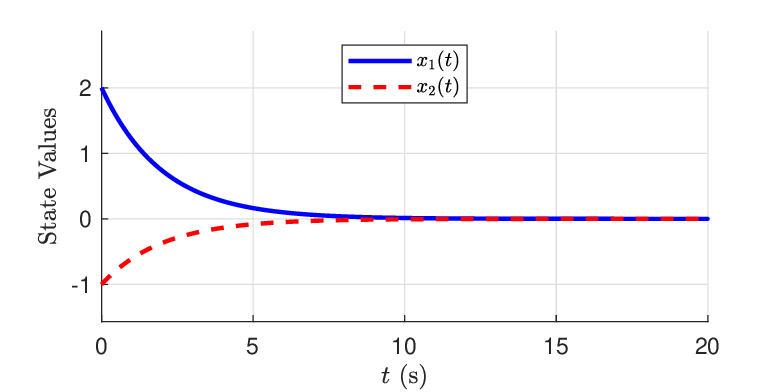}
		\caption{The evolution of two states.}
		\label{fig:sub2}
	\end{subfigure}
	
	
	\begin{subfigure}{1\textwidth}
		\centering
		\includegraphics[width= 0.67\textwidth, trim=0 0 0 0, clip]{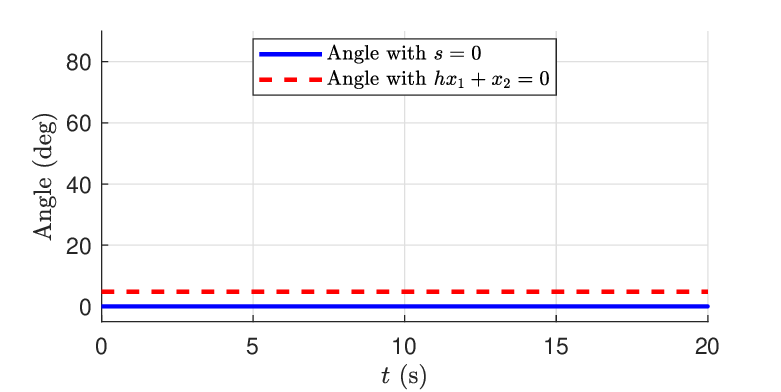}
		\caption{Angular relationship between phase trajectories and reference lines.}
		\label{fig:sub3}
	\end{subfigure}
	
	\caption{Simulation Results for Case 1.}
	\label{fig:three_cases_phase1}
\end{figure}

\begin{figure}
	\centering
	\begin{subfigure}{1\textwidth}
		\centering
		\includegraphics[width= 0.67\textwidth, trim=0 0 0 0, clip]{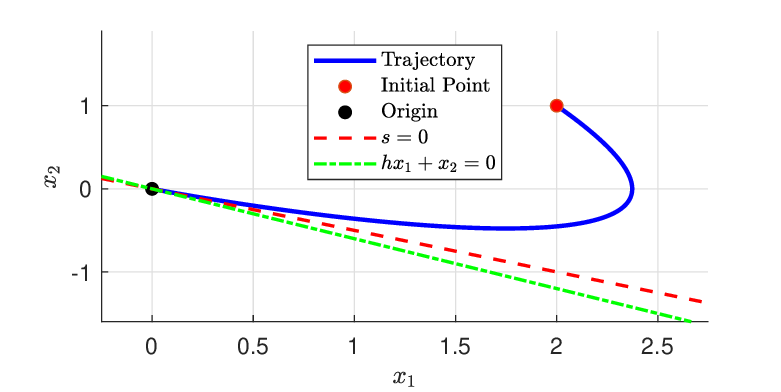}
		\caption{Phase portrait and state trajectories.}
		\label{fig:sub4}
	\end{subfigure}
	
	
	\begin{subfigure}{1\textwidth}
		\centering
		\includegraphics[width= 0.67\textwidth, trim=0 0 0 0, clip]{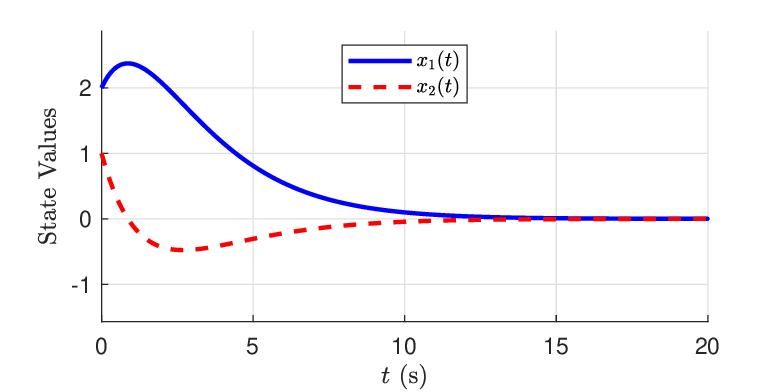}
		\caption{The evolution of two states.}
		\label{fig:sub5}
	\end{subfigure}
	
	
	\begin{subfigure}{1\textwidth}
		\centering
		\includegraphics[width= 0.67\textwidth, trim=0 0 0 0, clip]{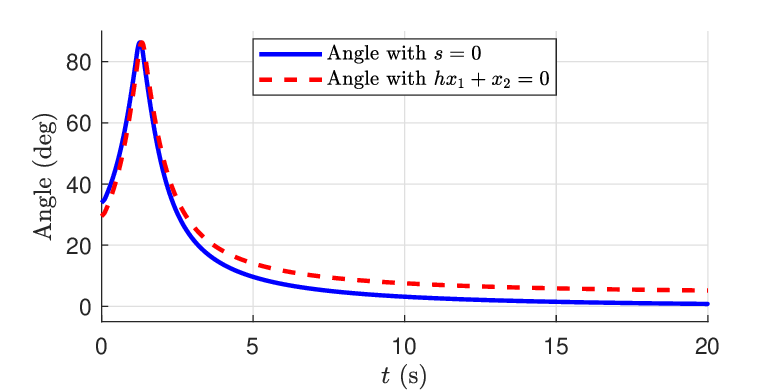}
		\caption{Angular relationship between phase trajectories and reference lines.}
		\label{fig:sub6}
	\end{subfigure}
	
	\caption{Simulation Results for Case 2.}
	\label{fig:three_cases_phase2}
\end{figure}

\begin{figure}
	\centering
	\begin{subfigure}{1\textwidth}
		\centering
		\includegraphics[width= 0.67\textwidth, trim=0 0 0 0, clip]{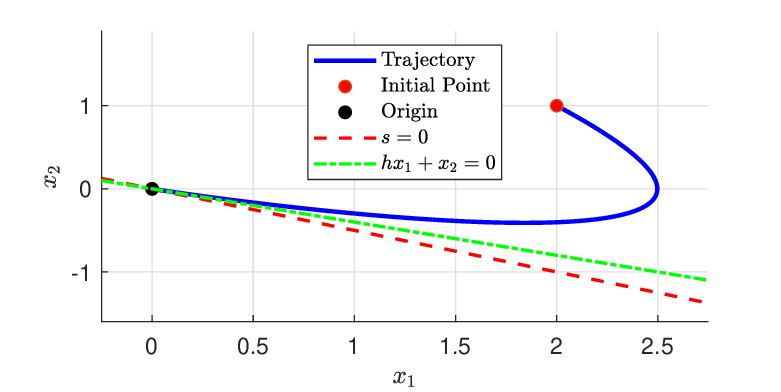}
		\caption{Phase portrait and state trajectories.}
		\label{fig:sub7}
	\end{subfigure}
	
	
	\begin{subfigure}{1\textwidth}
		\centering
		\includegraphics[width= 0.67\textwidth, trim=0 0 0 0, clip]{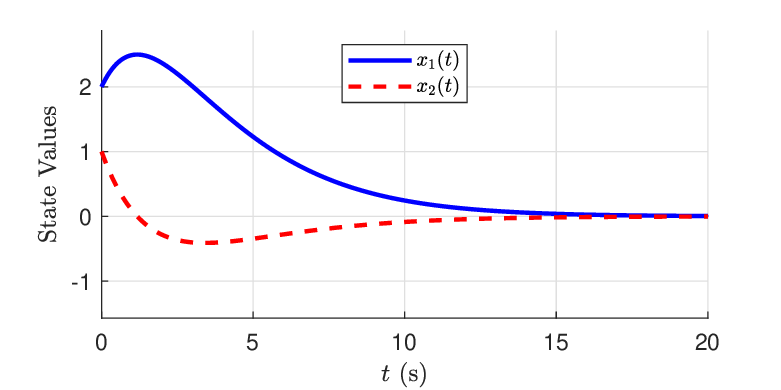}
		\caption{The evolution of two states.}
		\label{fig:sub8}
	\end{subfigure}
	
	
	\begin{subfigure}{1\textwidth}
		\centering
		\includegraphics[width= 0.67\textwidth, trim=0 0 0 0, clip]{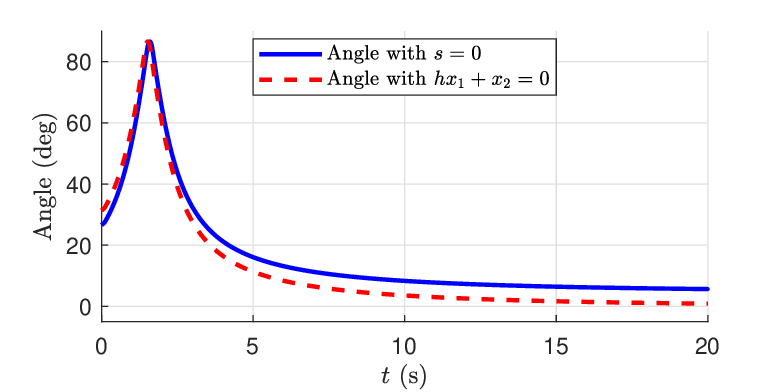}
		\caption{Angular relationship between phase trajectories and reference lines.}
		\label{fig:sub9}
	\end{subfigure}
	
	\caption{Simulation Results for Case 3.}
	\label{fig:three_cases_phase3}
\end{figure}

\begin{figure}
	\centering
	\begin{subfigure}{1\textwidth}
		\centering
		\includegraphics[width= 0.67\textwidth, trim=0 0 0 0, clip]{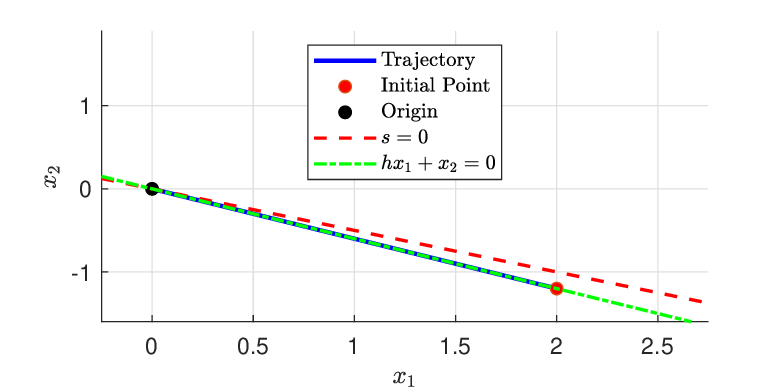}
		\caption{Phase portrait and state trajectories}
		\label{fig:sub10}
	\end{subfigure}
	
	
	\begin{subfigure}{1\textwidth}
		\centering
		\includegraphics[width= 0.67\textwidth, trim=0 0 0 0, clip]{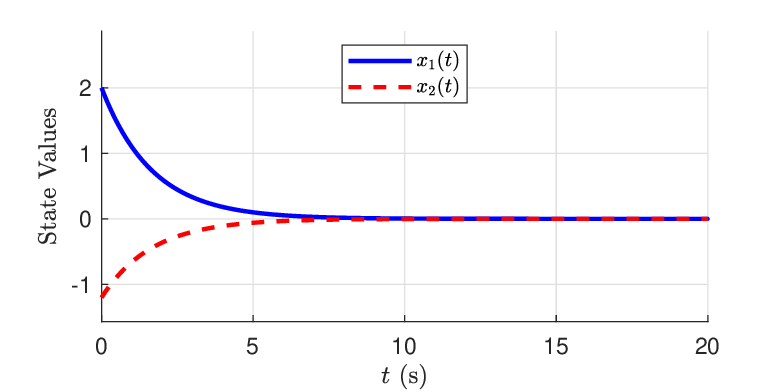}
		\caption{The evolution of two states.}
		\label{fig:sub11}
	\end{subfigure}
	
	
	\begin{subfigure}{1\textwidth}
		\centering
		\includegraphics[width= 0.67\textwidth, trim=0 0 0 0, clip]{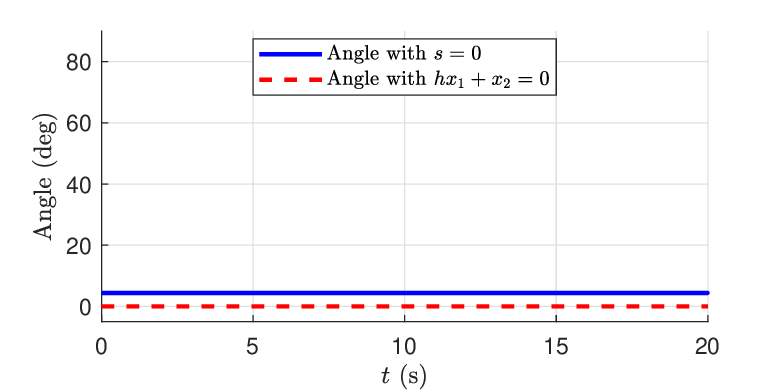}
		\caption{Angular relationship between phase trajectories and reference lines.}
		\label{fig:sub12}
	\end{subfigure}
	
	\caption{Simulation Results for Case 4.}
	\label{fig:three_cases_phase4}
\end{figure}

Furthermore, Figs.~\ref{fig:sub2}, \ref{fig:sub5}, \ref{fig:sub8}, and \ref{fig:sub11} illustrate the time-domain evolution curves of the system states \(x_1(t)\) and \(x_2(t)\) for all four simulation cases. It can be clearly observed that the system state variables \(x_1(t)\) and \(x_2(t)\) continuously decay and eventually converge to zero under all four scenarios. This time-domain response characteristic verifies that the closed-loop system maintains global asymptotic stability uniformly, irrespective of the differences in convergence pathways and asymptotic dynamic regimes among the four cases, which further confirms the stability conclusion of the theoretical analysis.

Figs.~\ref{fig:sub3}, \ref{fig:sub6}, \ref{fig:sub9}, and \ref{fig:sub12} present the time-domain variation of the indicator variable \(\Psi\) corresponding to the four simulation cases. The results show that \(\Psi=0\) is always satisfied throughout the dynamic evolution process for the first two cases, whereas \(\Psi\neq 0\) holds steadily for the latter two cases. This distinct difference directly reveals that the system cannot achieve smooth entry into the predefined sliding surface when the \(\mathcal{C}_2\) condition is satisfied. The essential cause of this phenomenon lies in the generation and dominance of the implicit secondary manifold in the closed-loop system dynamics, and this special manifold-induced dynamic deviation can also be clearly observed from the phase plane trajectory characteristics of the fourth case in Figs.~\ref{fig:sub1} and \ref{fig:sub4}.

In addition, the theoretical asymptotic state ratios of the four cases predicted by Theorem~\ref{thm_3} are \(-2.0000\), \(-2.0000\), \(-2.5000\), and \(-1.6667\), respectively, as quantified in Tab.~\ref{tab:four_cases_params2}. The theoretical predicted values and the numerical state ratios extracted from closed-loop simulation results are comparatively listed in Tab.~\ref{tab:four_cases_params}. It is observed that the numerical simulation results are in excellent agreement with the theoretical calculation values, with the relative error of all four cases controlled below \(0.15\%\). The negligible minor discrepancies are solely attributed to the inherent finite step size of the numerical simulation algorithm. The high consistency between simulation and theoretical results effectively verifies the correctness and accuracy of the proposed Theorem~\ref{thm_3}, and demonstrates the validity of the theoretical analysis for the four asymptotic dynamic cases.

\begin{table}
	\centering
	\caption{Simulation Results for Four Cases}
	\label{tab:four_cases_params2}
	\begin{tabular}{@{}lccc@{}}
		\toprule
		\textbf{Case} & {Description} &  Theoretical  State Ratio Limit & simulated State Ratio Limit \\
		\midrule
		$1$ & $0<kh<1$, $x_1(0)+kx_2(0) = 0$  & $-2.0000$ & $-2.0025$ \\
		$2$ & $kh>1$, $hx_1(0)+x_2(0) \neq 0$  & $-2.0000$ & $-1.9975$ \\
		$3$ & $0<kh<1$, $x_1(0)+kx_2(0) \neq 0$  & $-2.5000$ & $-2.4975$ \\
		$4$ & $kh>1$, $hx_1(0)+x_2(0) = 0$  & $-1.6667$ & $-1.6642$ \\
		\bottomrule
	\end{tabular}
\end{table}

\section{Conclusion}
\label{sec:conclusion}

This paper has presented a detailed analysis of the convergence characteristics of linear SMC applied to second-order systems. By deriving analytical expressions for the state trajectories, we have obtained exact formulas for the asymptotic state ratio and characterized the conditions under which convergence occurs strictly along the designed sliding surface. A central contribution is the demonstration that the conventional intuition of a unique smooth sliding entry is fundamentally incomplete: the system exhibits a bifurcation in its convergence modes, wherein the terminal state ratio depends critically on the interplay among control gains, surface parameters, and initial conditions. The analysis has further been generalized to PD controllers, revealing that the bifurcation threshold. These findings offer both deeper theoretical insight and a practical criterion for controller tuning in finite or fixed-time control applications.

\printcredits

\section*{Declaration of Competing Interest}
The authors declare that they have no known competing financial interests or personal relationships that could have appeared to influence the work reported in this paper.

\section*{Acknowledgments}
This research was supported in part by Zhejiang Provincial Natural Science Foundation of China under Grant QN26F030021, in part by the Natural Science Basic Research Program of Shaanxi under Grant 2025JC-YBMS-718, and in part by the Fundamental Research Funds for the Central Universities.

\bibliographystyle{elsarticle-num} 

\bibliography{references}

\end{document}